\theoremstyle{plain}
\newtheorem{theorem}{Theorem}
\newtheorem{lemma}{Lemma}
\newtheorem{corollary}{Corollary}
\newtheorem{proposition}{Proposition}
\theoremstyle{definition}
\newtheorem{definition}{Definition}
\newtheorem{example}{Example}
\newtheorem{remark}{Remark}
\let\oldabs\abs
\def\abs{\@ifstar{\oldabs}{\oldabs*}}
\let\oldnorm\norm
\def\norm{\@ifstar{\oldnorm}{\oldnorm*}}
\newcommand{\tr}{{\rm Tr}}
\newcommand{\rs}{{\rm RS}}
\newcommand{\grs}{{\rm GRS}}
\newcommand{\FF}{\mathbb{F}}
\newcommand{\BB}{\mathbb{B}}
\newcommand{\C}{\mathcal{C}}
\newcommand{\A}{\mathcal{A}}
\newcommand{\bx}{\pmb{x}}
\newcommand{\ba}{\pmb{a}}
\newcommand{\bb}{\pmb{b}}
\newcommand{\bc}{\pmb{c}}
\newcommand{\bes}{\beta^*}
\newcommand{\cj}{c_j}
\newcommand{\ft}{\mathbb{F}_2}
\newcommand{\ff}{\mathbb{F}_4}
\newcommand{\fq}{\mathbb{F}_q}
\newcommand{\fql}{\mathbb{F}_{q^\ell}}
\newcommand{\fa}{f(\alpha)}
\newcommand{\faj}{f(\alpha_j)}
\newcommand{\fb}{f(\beta)}
\newcommand{\qba}{\mathsf{Q}^{(\beta)}(\alpha)}
\newcommand{\qbha}{\mathsf{Q}^{(\beta)}_z(\alpha)}
\newcommand{\qboa}{\mathsf{Q}^{(\beta)}_1(\alpha)}
\newcommand{\qbaj}{\mathsf{Q}^{(\beta)}(\alpha_j)}
\newcommand{\qbpa}{\mathsf{Q}^{(\beta')}(\alpha)}
\newcommand{\rba}{\mathsf{R}^{(\beta)}(\alpha)}
\newcommand{\rboa}{\mathsf{R}^{(\beta)}_1(\alpha)}
\newcommand{\rbha}{\mathsf{R}^{(\beta)}_z(\alpha)}
\newcommand{\la}{\lambda_\alpha}
\newcommand{\al}{\alpha}
\newcommand{\be}{\beta}
\DeclareMathOperator{\im}{Im}
\DeclareMathOperator{\spn}{span_{\mathbb{B}}}
\DeclareMathOperator{\pr}{Pr}
\begin{document}

\title{Private Repair of a Single Erasure\\ in Reed-Solomon Codes}

\author{Stanislav Kruglik$^\dag$, Han Mao Kiah$^\dag$, Son Hoang Dau$^\ddag$, and Eitan Yaakobi$^{\dag*}$\\$^\dag$Nanyang Technological University, $^\ddag$RMIT University, $^*$Technion\\
 {\{stanislav.kruglik, hmkiah\}@ntu.edu.sg, sonhoang.dau@rmit.edu.au, yaakobi@cs.technion.ac.il}}

\maketitle

\begin{abstract}
We investigate the problem of \textit{privately} recovering a single erasure for Reed-Solomon codes with low communication bandwidths. For an $[n,k]_{\fql}$ code with $n-k\geq q^{m}+t-1$, we construct a repair scheme that allows a client to recover an arbitrary codeword symbol without leaking its index to any set of $t$ colluding helper nodes at a \textit{repair bandwidth} of $(n-1)(\ell-m)$ sub-symbols in $\FF_q$. When $t=1$, this reduces to the bandwidth of existing  repair schemes based on subspace polynomials. We prove the optimality of the proposed scheme when $n=q^\ell$ under a reasonable assumption about the schemes being used. Our private repair scheme can also be transformed into a private \textit{retrieval} scheme for data encoded by Reed-Solomon codes. 
\end{abstract}

\section{Introduction}
\label{sec:intro}

The problem of recovering a single erasure for Reed-Solomon codes~\cite{originalRSpaper}, currently widely used in distributed storage systems~\cite{DinhNguyenMohanSerdarLuongDau_ISIT2022}, has attracted considerable efforts from the research community in the past few years~\cite{Shanmugam2014,guruswami2016repairing,guruswami2017repairing,dau2017optimal,DauDuursmaKiahMilenkovicTwoErasures2017,YeBarg_ISIT2016,YeBarg_TIT2017,tamo2017optimal,TamoYeBarg2018,dau2018repairing,DuursmaDau2017,LiWangJafarkhani-Allerton-2017,li2019sub,LiWangJafarkhani_CommLett_2021,ChowdhuryVardy2017,ChowdhuryVardy_TIT_2019,DauViterbo-ITW-2018,DauDuursmaChu-ISIT-2018,LiDauWangJafarkhaniViterbo_ISIT2019,  ZhangZhang_ISCIT_2019,XuZhangWangZhang_2023,DinhBoztasDauViterbo_2023,DinhNguyenMohanSerdarLuongDau_ISIT2022,ConTamo_2022,ConShuttyTamoWootters_2023,BermanBuzagloDorShanyTamo_ISIT_2021,ShuttyWootters_2022,kiah2023explicit, ISIT2024}. In its typical setting, given a finite field $\FF$, $k$ data objects $\bx=(x_1,\ldots,x_k) \in \FF^k$ are first transformed into $n>k$ codeword symbols $\bc=(c_1,c_2,\ldots,c_n)\in \FF^n$ using a Reed-Solomon code, which then are stored at $n$ different storage nodes (servers). The key question is to find a repair scheme to recover a codeword symbol $c_{j^*}$ stored at Node $j^*$ by extracting as little data as possible from other $\cj$, $j\neq j^*$, stored at the helper nodes. The total amount of information \textcolor{black}{downloaded from helper node to complete the repair process}\footnote{\textcolor{black}{We note that focusing on download cost instead of overall communications can be motivated by the fact that each node can store several Reed-Solomon entities corresponding to the same evaluation point. In this case, we can perform repair through a single request, making the upload cost negligible compared to the download cost.}} is referred to as the \textit{repair bandwidth} in the literature, which should be minimized to reduce the communication cost. In this work, we are interested in a \textit{privacy} issue of the repair process and ask the following question: can the repair node \textit{hide} the index $j^*$ of the node being repaired from at most $t$ colluding curious helper nodes? A toy example of such a scheme with $t=1$ is given in Fig.~\ref{fig:toy_example}.

Private repair schemes for Reed-Solomon codes have a direct application in private \textit{retrieval} of data encoded by Reed-Solomon codes, described as follows. Suppose that a storage system stores $k$ data objects $\bx=(x_1,\ldots,x_k)$ across $n$ server nodes, which are encoded by a Reed-Solomon code. Thanks to the MDS property of the code, a client can always retrieve an object without leaking its identity to any group of nodes by retrieving all the data from an arbitrary set of $k$ nodes and recover the object. However, this naive approach requires a high communication cost: it is as costly as retrieving all $k$ objects. Using a private repair scheme with low repair bandwidth, the client downloads data from other nodes according to this scheme while also \textit{pretending} to request some data from the node storing the desired object.
For example, in Fig.~\ref{fig:toy_example}~a), the client can pretend to download $a_2$ from the node storing $(a_1,a_2)$ to make it unable to determine whether $\ba$ or $\bb$ is being retrieved. \textcolor{black}{Importantly, this notion differs from the coded private information retrieval (PIR) setup considered in~\cite{PIR1, PIR2, PIR3} and references therein in the way we store data objects on nodes. This results in significantly lower storage on each node in absolute terms due to the smaller size of each data object (as low as one field element). At the same time, encoding data objects by systematic Reed-Solomon code allows us to create a storage system where a data object without a privacy requirement can be retrieved by downloading the whole content of one node, while with a privacy requirement, we can employ a private repair scheme. Note that the introduced redundancy can also be employed for failed node repair.}

We \textit{first} modify the repair schemes based on subspace polynomials (see~\cite{guruswami2016repairing, guruswami2017repairing, dau2017optimal, dau2021repairing}) to accommodate the privacy requirement, achieving a bandwidth of $(n-1)(\ell-m)$ sub-symbols in $\fq$ when repairing a single erasure in an $[n,k]_{\fql}$ Reed-Solomon code, given that $n-k\geq q^m+t-1$. 
Essentially, the repair node selects a repair scheme \textit{randomly} to confuse the helper nodes.
When $t=1$, i.e., privacy against one curious node is required, this reduces to the previously known repair bandwidth~\cite{dau2017optimal, dau2021repairing} (privacy against one node is free!). 
We \textit{then} prove that, under a reasonable assumption on private repair schemes employed, the achieved bandwidth is optimal. 
Note that while in the minimal example depicted in Fig.~\ref{fig:toy_example}, the \textit{retrieval} scheme based on the constructed private repair scheme incurs the same bandwidth (4 bits) as the naive approach, in general, our constructed repair scheme leads to a lower \textit{retrieval} bandwidth than that of the naive one in several parameter regimes. 
 
\begin{figure}[tb]
\centering
\includegraphics[scale=0.8]{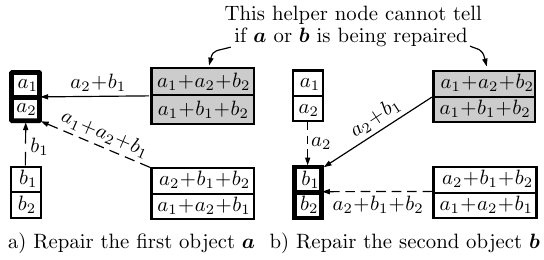}
\caption{A toy example of a private repair scheme operating in a distributed storage system that stores two objects $\ba=(a_1,a_2)\in \ff$ and $\bb=(b_1,b_2)\in \ff$ with two parities using a systematic $[4,2]_{\ff}$ Reed-Solomon code (see Example~\ref{ex:toy} for more details). The client, which plays the role of a repair node, requests $a_2+b_1=(a_1+a_2+b_2)+(a_1+b_1+b_2)$ from the shaded parity node, which cannot tell if the request is for repairing $\ba$ or $\bb$, as both cases are equally probable. A similar example can also be constructed for the bottom parity node. Such a repair scheme (based on trace polynomials) is private against one curious node.} 
\label{fig:toy_example}
\end{figure}

The paper is organized as follows. Necessary definitions and notations are introduced in Section~\ref{sec:preliminaries}.
Our construction of a private repair scheme is developed in Section~\ref{sec:scheme}. The lower bound on the private repair bandwidth is provided in Section~\ref{sec:bound}. We conclude the paper in Section~\ref{sec:conclusion}.

\section{Preliminaries}
\label{sec:preliminaries}

We first define basic notations and then describe the repair schemes based on subspace polynomials (\cite{guruswami2016repairing,guruswami2017repairing,dau2017optimal,dau2021repairing}).

\subsection{General Definitions and Notations}
\label{subsec:general_notations}

Let $[n]$ denote the set $\{1,2,\ldots,n\}$ and $\FF$ denote a finite field. 
A {\em linear $[n, k]_\FF$ code} $\C$ is a $k$-dimensional $\FF$-linear subspace of $\FF^n$. Each $\bc=(c_1,\ldots,c_n)\in\C$ is called a {\em codeword}, while $c_j$'s are called \textit{codeword symbols}. 
The orthogonal complement of $\mathcal{C}$ is called the {\em dual code} $\mathcal{C}^{\perp}$, which is an $[n,n-k]$ linear code over $\FF$. 
Hence, for every $\pmb{c}=(c_1,\ldots, c_n)\in\mathcal{C}$ and $\pmb{c}^{\perp}=(c_1^{\perp},\ldots, c_n^{\perp})\in\mathcal{C}^{\perp}$, it holds that $\sum_{j=1}^n c_j c_j^{\perp}=0$. 

\begin{definition}[\cite{originalRSpaper}]\label{def:rs}
Let $\FF[x]$ denote the ring of polynomials over $\FF$. The {\em Reed-Solomon code} $\rs(\mathcal{A}, k)$ of dimension $k$ with evaluation points $\mathcal{A}= \{\alpha_1, \dots , \alpha_n\}  \subseteq \FF$ 
is defined as:
\[
\rs(\mathcal{A}, k) \triangleq \left\{(f(\alpha_1), \ldots , f(\alpha_n)) \colon f \in \FF[x],\, \deg(f) < k\right\}. 
\]
On the other hand, the {\em generalized Reed-Solomon code} $\grs(\mathcal{A}, k, \pmb{\lambda})$ of dimension $k$ with evaluation points $\mathcal{W}$ and non-zero multipliers $\pmb{\lambda}=(\lambda_{\alpha_1},\ldots,\lambda_{\alpha_n})\in\FF^n$ is defined as:
    \begin{align*}
    &\grs(\mathcal{A}, k, \pmb{\lambda})\notag\\
   &\triangleq\left\{
	(\lambda_{\alpha_1}f(\alpha_1), \ldots , \lambda_{\alpha_n}f(\alpha_n))
	\colon f \in \FF[x],\, \text{deg}(f) < k\right\}.  
    \end{align*}
\end{definition}

Clearly, the generalized Reed-Solomon code $\grs(\mathcal{A}, k, \pmb{\lambda})$ with multiplier vector $\pmb{\lambda}=(1,\ldots,1)$ is a Reed-Solomon code $\textrm{RS}(\mathcal{A}, k)$ with the same set of evaluation points and dimension. Another important fact is that the dual of $\textrm{RS}(\mathcal{A}, k)$ is $\textrm{GRS}(\mathcal{A}, n-k, \pmb{\lambda})$, where the multiplier vector can be explicitly given as (see, for example \cite[Ch. 10, 12]{MacWilliams1977}) 
\begin{equation}\label{GRSM}
\lambda_{\alpha_i}^{-1}=\prod_{j\in [n]\setminus\{i\}}(\alpha_i-\alpha_j).    
\end{equation}

For convenience, when clear from context, $f(x)$ denotes a polynomial of degree at most $k-1$ corresponding to a codeword in $\rs(\mathcal{A},k)$, 
while $r(x)$ denotes a polynomial of degree at most $n-k-1$ corresponding to a dual codeword from $\grs(\mathcal{A}, n-k, \pmb{\lambda})$. 
The trace-repair framework for recovering one codeword symbol of Reed-Solomon codes introduced by Guruswami and Wootters in~\cite{guruswami2016repairing, guruswami2017repairing} relies on the existence of low-degree polynomials $r(x)$ and the equality
	$\sum_{j=1}^n \lambda_{\alpha_j} f(\alpha_j) r(\alpha_j) = 0$.

In what follows, we refer to polynomials $r(x)$ as {\em parity-check polynomials} for $\C$. Of interest, we employ {\em subspace polynomials}, introduced as a generalization of {\em trace polynomials} for Reed-Solomon code repair problem in~\cite{dau2017optimal, dau2021repairing}. Let us formally define them. Suppose that $\FF$ is a field extension of $\BB$ of degree $\ell > 1$. Then $\FF$ can also be treated as the vector space $\BB^\ell$. We refer to the elements of $\FF$ as {\em symbols} and the elements of $\BB$ as {\em sub-symbols}. Note that each symbol comprises of $\ell$ sub-symbols. Let us we consider the {\em trace function} $\tr \colon \FF\to \BB$ defined as 
	$\tr(x) = \sum_{j=0}^{\ell-1}x^{|\mathbb{B}|^j} \text{ for } x\in \FF$. 
It is well known that $\tr$ is a $\BB$-linear mapping from $\FF$ to $\BB$.

We can treat $\FF$ as a $\BB$-linear space of dimension $\ell$ and define a $\BB$-basis $\{u_1,\ldots,u_\ell\}$ for $\FF$. Moreover, from~\cite[Ch. 2]{Lidl}, there exists a {\em trace-dual basis} $\{\widetilde{u}_1,\ldots,\widetilde{u}_\ell\}$, which satisfies that $\tr(u_i\widetilde{u}_j)=1$ if $i=j$, and $\tr(u_i\widetilde{u}_j)=0$, otherwise, and
\begin{equation}
    \label{eq:trace_reconstruction}
x=\sum_{i=1}^\ell\tr(u_ix)\widetilde{u}_i 
\end{equation}    
for every $x \in \FF$, which says that every symbol $x\in \FF$ can be recovered from its $\ell$ (independent) traces $\tr(u_ix)$'s, where $\{u_1,\ldots,u_\ell\}$ forms a $\BB$-basis of $\FF$. 



\subsection{Repair Schemes Based on Subspace Polynomials}
\label{subsec:subspace}

\begin{definition}
Let $W$ be an $m$-dimensional $\BB$-subspace of $\FF$ ($0\hspace{-3pt} <\hspace{-3pt} m\hspace{-3pt} <\hspace{-3pt} \ell$). 
The subspace polynomial defined by $W$ is given by 
\begin{align*}
    L_W(x)\triangleq\prod_{\omega\in W}(x-\omega)=x\prod_{\omega\in W\setminus\{0\}}(x-\omega)=\sum_{j=0}^m l_jx^{|\mathbb{B}|^j},
\end{align*}
where $l_0 = - \prod_{w \in W\setminus\{0\}}w \neq 0$. Other coefficients $l_j$, $j> 0$, do not play any role in the repair scheme. 
\end{definition}

Note that $L_W$ is a $\mathbb{B}$-linear mapping from $\FF$ to itself with kernel $W$ (see, e.g.~\cite{Lidl}). The image of $L_W$ is a subspace of dimension $\ell-m$ over $\mathbb{B}$ and has a $\BB$-basis $\{\chi_1,\dots,\chi_{\ell-m}\}$. 

Suppose one wants to recover $\fb$, for some $\beta \in \A$. 
Pick a $\BB$-subspace $W$ of dimension $m$ in $\FF$, a $\BB$-basis $\{u_1,\ldots,u_\ell\}$ of $\FF$, and form the following polynomials
\begin{equation}\label{subspace-PCE}
    r_i(x) \triangleq \frac{L_W(u_i(x-\beta))}{x-\beta},\quad\text{for all }i=1,\ldots,\ell.
\end{equation}

It is clear that $\deg(r_i)=|\BB|^{m}-1$ and $r_i(\beta) = u_il_0$. If $|\BB|^{m}-1 < n - k$, then $r_i(x)$ are parity-check polynomials for the code and the following parity check equations hold
\begin{align}\label{eq:paritycheckSP}
    u_il_0\lambda_{\beta} \fb = -\sum_{\alpha\in\mathcal{A}\setminus\{\beta\}} r_i(\alpha)\lambda_{\alpha}\fa.
\end{align}
As $r_i(\alpha)\in \im(L_W)=\spn(\chi_1,\ldots,\chi_{\ell-m}\}$, we can write $r_i(\alpha) = \sum_{h\in [\ell-m]}\sigma_{i,\alpha,h}\chi_h$, for some $\sigma_{i,\alpha,h}\in \BB$, $h\in [\ell-m]$. Applying the trace function to both sides of \eqref{eq:paritycheckSP}, we obtain

\begin{align}
    \label{eq:repair_equations}
    &\tr(u_il_0\lambda_{\beta} \fb) = -\sum_{\alpha\in\mathcal{A}\setminus\{\beta\}} \tr(r_i(\alpha)\lambda_{\alpha}\fa)\notag\\
    &=-\sum_{\alpha\in\mathcal{A}\setminus\{\beta\}}\tr\bigg(\sum_{h\in [\ell-m]}\sigma_{i, \alpha, h}\chi_h\frac{\lambda_\alpha \fa}{\alpha-\beta}\bigg)\notag\\
    &=-\sum_{\alpha\in\mathcal{A}\setminus\{\beta\}} \sum_{h \in [\ell-m]}\sigma_{i,\alpha,h} \tr\left(\frac{\lambda_\alpha \fa\chi_h}{\alpha-\beta}\right).
\end{align}

From \eqref{eq:trace_reconstruction}, we know that $\fb$ can be reconstructed from $\tr(u_il_0\lambda_{\beta} \fb)$, $i\in [\ell]$, since $\{u_il_0\lambda_{\beta}\}_{i\in [\ell]}$ is a basis of~$\FF$.
Therefore, from \eqref{eq:repair_equations}, to repair $\fb$, it suffices to retrieve $\ell-m$ traces $\left\{\tr\left(\frac{\lambda_\alpha \fa\chi_h}{\alpha-\beta}\right)\right\}_{h\in [\ell-m]}$ of $\fa$ from the node storing $\fa$, $\alpha \in \A\setminus \{\beta\}$. 
Note that $\sigma_{i, \alpha, h}$'s are independent of $\fa$ and can be determined from $\alpha$, $\beta$, $W$, $u_i$, and $\{\chi_h\}_{h\in [\ell-m]}$.

\begin{proposition}[Subspace-polynomial repair scheme~\cite{guruswami2016repairing,guruswami2017repairing,dau2017optimal,dau2021repairing}]
    \label{pro:subspace_repair}
    Consider an $[n,k]_\FF$ $\rs(\mathcal{A}, k)$ with $\FF$ a degree-$\ell$ extension of $\BB$ satisfying $|\BB|^m \leq n-k$. To recover a codeword symbol $\fb$, where $f\in \FF[x]$, $\deg(f)<k$, and $\beta\in \A$, it suffices for the repair node to download $\ell-m$ traces $\left\{\tr\big(\frac{\lambda_\alpha \fa\chi_h}{\alpha-\beta}\big)\right\}_{h \in [\ell-m]}$ from the node storing $\fa$, for all $\alpha\in \A\setminus \{\beta\}$. Here, $W$ is an $m$-dimension $\BB$-subspace of $\FF$, $L_W(x)$ is its subspace polynomial, and $\{\chi_h\}_{h\in[\ell-m]}$ is a $\BB$-basis of $\im(L_W)$.
\end{proposition}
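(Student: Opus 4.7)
The plan is to exploit the $\BB$-linearity of the subspace polynomial $L_W$ to produce $\ell$ parity-check relations whose evaluations at $\beta$ span $\FF$ over $\BB$ (so that $f(\beta)$ is determined by $\ell$ trace values), while their evaluations at every $\alpha\neq\beta$ all lie in the $(\ell-m)$-dimensional $\BB$-subspace $\im(L_W)$, so that only $\ell-m$ sub-symbols per helper node are genuinely needed.

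First, for each $i\in[\ell]$, I define $r_i(x)=L_W(u_i(x-\beta))/(x-\beta)$ as in \eqref{subspace-PCE}. Since $L_W(y)=\sum_{j=0}^m l_jy^{|\BB|^j}$ satisfies $L_W(0)=0$, the numerator vanishes at $x=\beta$, so $r_i$ is a genuine polynomial; its degree is exactly $|\BB|^m-1$, as the leading term in $x$ of the numerator is $l_m u_i^{|\BB|^m}(x-\beta)^{|\BB|^m}$. The hypothesis $|\BB|^m\le n-k$ then gives $\deg(r_i)\le n-k-1$, and by the dual description of $\rs(\A,k)$ as $\grs(\A,n-k,\pmb{\lambda})$ with multipliers as in \eqref{GRSM}, each $r_i$ is a valid parity-check polynomial, so \eqref{eq:paritycheckSP} holds. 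Evaluating at $x=\beta$ gives $r_i(\beta)=l_0 u_i$, and since $l_0\lambda_\beta\neq 0$ the set $\{u_il_0\lambda_\beta\}_{i\in[\ell]}$ is again a $\BB$-basis of $\FF$.

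Next, I apply $\tr$ to both sides of \eqref{eq:paritycheckSP}. Because $r_i(\alpha)\in\im(L_W)=\spn(\chi_1,\ldots,\chi_{\ell-m})$, I expand $r_i(\alpha)=\sum_{h\in[\ell-m]}\sigma_{i,\alpha,h}\,\chi_h$ with scalars $\sigma_{i,\alpha,h}\in\BB$ that depend only on $i,\alpha,W,\{u_i\},\{\chi_h\}$ and not on the data polynomial $f$. Pulling the $\sigma_{i,\alpha,h}$ outside the trace by $\BB$-linearity yields \eqref{eq:repair_equations}, which expresses each of the $\ell$ required sub-symbols $\tr(u_il_0\lambda_\beta f(\beta))$ as a fixed $\BB$-linear combination of the $(n-1)(\ell-m)$ sub-symbols $\tr(\lambda_\alpha f(\alpha)\chi_h/(\alpha-\beta))$ indexed by $\alpha\in\A\setminus\{\beta\}$ and $h\in[\ell-m]$. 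Applying \eqref{eq:trace_reconstruction} to the basis $\{u_il_0\lambda_\beta\}_{i\in[\ell]}$ then recovers $f(\beta)$.

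The only step that demands real care is the first one --- confirming that $r_i$ is a polynomial of degree exactly $|\BB|^m-1$ so that \eqref{eq:paritycheckSP} applies --- and this was handled by the divisibility observation $L_W(0)=0$ together with the leading-term calculation. The conceptual heart of the argument is that although the coefficients $\sigma_{i,\alpha,h}$ depend on $i$, the actual downloaded quantities $\tr(\lambda_\alpha f(\alpha)\chi_h/(\alpha-\beta))$ do not; consequently a single bundle of $\ell-m$ sub-symbols per helper node simultaneously serves all $\ell$ indices $i$, giving the claimed bandwidth.
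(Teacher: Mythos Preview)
Your proof is correct and mirrors exactly the paper's own derivation preceding the proposition: define $r_i(x)=L_W(u_i(x-\beta))/(x-\beta)$, check the degree bound, apply the trace to the parity-check identity \eqref{eq:paritycheckSP}, and expand in the basis $\{\chi_h\}$ of $\im(L_W)$ to obtain \eqref{eq:repair_equations}. One small imprecision (shared with the paper's text) is the assertion ``$r_i(\alpha)\in\im(L_W)$'': strictly it is $(\alpha-\beta)r_i(\alpha)=L_W(u_i(\alpha-\beta))$ that lies in $\im(L_W)$, which is precisely why the downloaded traces carry the factor $1/(\alpha-\beta)$ in \eqref{eq:repair_equations}.
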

Following the repair in Proposition~\ref{pro:subspace_repair}, assuming that the subspace $W$, the polynomials $r_i$'s, and the basis $\{\chi_h\}_{h\in [\ell-m]}$ are publicly known, the client (repair node) who wants to recover $\fb$ simply sends $\beta$ to all the helper nodes. The node storing $\fa$, $\alpha\in \A\setminus \{\beta\}$, computes and sends $\tr\left(\frac{\lambda_\alpha \fa\chi_h}{\alpha-\beta}\right)$, $h\in [\ell-m]$, back to the client. 
We note that this approach immediately \textit{reveals} $\beta$ to all helper nodes. In the next section, we modify the scheme to hide $\beta$ from every set of at most $t$ colluding helper nodes ($t<k$).

\section{Private Repair for Reed-Solomon Codes}
\label{sec:scheme}

We first formally define private repair and then present the hidden-subspace scheme, which provides privacy against one helper node. 
Next, we establish a secret-sharing based repair scheme that is private against $t$-colluding helper nodes ($t\geq 1$).
In both approaches, a \textit{random} repair scheme must be used to confuse the helper nodes.   
Finally, we describe a transformation of such scheme to achieve private retrieval. 

\subsection{Definition of Private Repair}
\label{sec:motivation}

\begin{definition}
    \label{def:private_repair}
    Assume that the code $\rs(\A,k)$ and its parameters are publicly known, and that the node storing $\fa$ also knows~$\alpha$. Suppose the repair node, who wishes to repair $\fb$ for some $\beta\in \A$, sends a query vector $\qba = \big(\qboa,\ldots,\qbha\big)\in \FF^z$ to nodes storing $\fa$, $\alpha\in \A\setminus\{\beta\}$. Upon receiving $\qba$, the node storing $\fa$ responds with $\rba = (\rboa,\ldots,\rbha)$ $=\big(\tr(\qboa \fa),\ldots,\tr(\qbha\fa)\big)\in \BB^z$. The repair node uses $\{\rba\}_{\al\in \A\setminus\{\be\}}$ to recover $\fb$. The repair scheme is $t$-private if the following two conditions are satisfied.
    \begin{itemize}
        \item \textbf{(Correctness)} 
        \textcolor{black}{$\fb$ can be rebuilt from all the responses.}
        \item \textbf{(Privacy)} $\pr\left(\be = \bes \mid \{\qbaj\}_{j\in J}\right)=\frac{1}{n-|J|}$ for every $\bes\in \A\setminus \{\al_j\}_{j\in J}$, and every subset $\{\alpha_j\colon j \in J\}\subset \A\setminus \{\beta\}$, $|J|\leq t$. In other words, every set of $|J|\leq t$ requests does not leak any further information about $\beta$ apart from the knowledge that $\be \notin \{\al_j\colon j \in J\}$. 
    \end{itemize}  
\end{definition}

\subsection{A Hidden-Subspace Private Repair Scheme}
\label{subsec:hidden_subspace}

In a hidden-subspace repair scheme, the repair node randomly picks a subspace $W$ to generate the parity-check polynomials, thus effectively hiding the value to be repaired against every curious helper node. Such a scheme is $1$-private. \textcolor{black}{Before formulating the corresponding theorem, we need two technical lemmas on subspace polynomials. Note that Lemma~2 was proven in~\cite[Lemma~$3.7$]{lemma2}, however, the provided proof is different. For the convenience of the reader and completeness of the paper, we leave it here.}

\begin{lemma}
    \label{lem:different_W_different_imLW}
    If $W$ and $W'$ are two different $m$-dimensional $\fq$-subspaces of $\fql$ then $\im(L_W)\neq \im(L_{W'})$.
\end{lemma}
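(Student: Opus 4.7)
The approach is to show that $\im L_W$ determines $W$ uniquely via a chain of bijective correspondences based on the trace bilinear form $\langle a,b\rangle:=\tr(ab)$ on $\fql$. Since $U\mapsto U^\perp$ is a dimension-reversing bijection on $\fq$-subspaces of $\fql$, it is enough to show that the $m$-dimensional trace-dual $V:=(\im L_W)^\perp$ determines $W$.

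The key tool is the $q$-reversed polynomial $\tilde L_W(x):=\sum_{j=0}^m l_{m-j}^{q^j}\,x^{q^j}$, where $L_W(x)=\sum_{j=0}^m l_j x^{q^j}$ with $l_m=1$ and $l_0\neq 0$. Note that $\tilde L_W$ is an $\fq$-linearized polynomial of degree $q^m$, with coefficient of $x$ equal to $l_m=1$ and leading coefficient $l_0^{q^m}$. The main claim is that $V$ is exactly the set of roots of $\tilde L_W$ in $\fql$.

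To prove this claim, I would use the identities $\tr(a^{q^i})=\tr(a)$ and $a^{q^\ell}=a$ for $a\in\fql$ to rewrite $\tr(y L_W(x))=\tr\bigl(x\sum_j y^{q^{\ell-j}} l_j^{q^{\ell-j}}\bigr)$, so that $y\in V$ iff $M(y):=\sum_j y^{q^{\ell-j}} l_j^{q^{\ell-j}}=0$. Raising to the $q^m$-th power and reducing exponents using $a^{q^{\ell+k}}=a^{q^k}$ in $\fql$ yields $M(y)^{q^m}=\tilde L_W(y)$; since Frobenius is a bijection on $\fql$, this gives $V=\{y\in\fql:\tilde L_W(y)=0\}$. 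Because $|V|=q^m$ equals the degree of $\tilde L_W$, all roots of $\tilde L_W$ lie in $V$ and the factorization $\tilde L_W(x)=l_0^{q^m}\,L_V(x)$ follows.

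To conclude: if $\im L_W=\im L_{W'}$ then $V=V'$, so both $\tilde L_W$ and $\tilde L_{W'}$ are scalar multiples of $L_V$, and the normalization (coefficient of $x$ equal to 1, forced by $l_m=l_m'=1$) forces $\tilde L_W=\tilde L_{W'}$. Since the map $L\mapsto\tilde L$ simply permutes and Frobenius-twists the coefficients of $L$, it is injective, so $L_W=L_{W'}$ and hence $W=\ker L_W=\ker L_{W'}=W'$, contradicting $W\neq W'$. The main obstacle is the Frobenius-exponent bookkeeping needed to derive $M(y)^{q^m}=\tilde L_W(y)$; once this identity is in place, every remaining step follows formally from standard trace-dual and polynomial-degree considerations.
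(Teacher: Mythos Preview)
Your argument is correct and takes a genuinely different route from the paper. The paper argues by contradiction as follows: setting $U:=\im(L_W)=\im(L_{W'})$, both compositions $L_U\circ L_W$ and $L_U\circ L_{W'}$ vanish identically on $\fql$, hence (being monic of degree $q^\ell$) both equal $x^{q^\ell}-x$; subtracting gives $L_U\circ(L_W-L_{W'})=0$ as a polynomial, which forces $L_W=L_{W'}$ since composing a nonconstant polynomial with a nonzero one cannot give zero. This is shorter and avoids any trace-duality machinery.

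Your approach is more explicit and more informative: you actually identify $(\im L_W)^\perp$ as the root set of the adjoint linearized polynomial $\tilde L_W$, which is a classical fact of independent interest, and then recover $L_W$ from that root set via the normalization $l_m=1$. The bookkeeping you flag (verifying $M(y)^{q^m}=\tilde L_W(y)$ via $a^{q^{\ell+k}}=a^{q^k}$ in $\fql$) is straightforward, and your concluding injectivity step is clean because Frobenius is bijective on coefficients. In short: the paper's proof is a quick two-line composition/degree trick, while yours is a constructive inverse via the trace-adjoint; both are valid, and yours would be preferable in a context where the explicit correspondence $W\leftrightarrow(\im L_W)^\perp$ is needed downstream.
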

\begin{proof}
    We assume, for the sake of contradiction, that $W\neq W'$ but $U = \im(L_W) = \im(L_{W'})$. 
    Since $\ker(L_U)=U=\im(L_W)=\im(L_{W'})$, we deduce that $L_U\circ L_W \equiv 0$ and $L_U \circ L_{W'} \equiv 0$ on $\fql$, i.e., $L_U(L_W(\al)) = L_U(L_{W'}(\al)) = 0$ for every $\al\in \fql$. Therefore, $x^{q^s}-x = \prod_{\al\in \fql}(x-\al)$ divides both $L_U\circ L_W$ and $L_U \circ L_{W'}$. Since $\deg(L_U)=q^{s-m}$, $\deg(L_W)=\deg(L_{W'})=q^m$, we have $\deg(L_U\circ L_W)=\deg(L_U\circ L_{W'})=q^s$. Moreover, both are monic polynomials. Hence, $L_U\circ L_W \equiv L_U \circ L_{W'} = x^{q^s}-x$. This implies that $0 = L_U\circ L_W - L_U \circ L_{W'} = L_U \circ (L_W-L_{W'})$, which is impossible because $L_W - L_W' \neq 0$ due to the assumption that $W \neq W'$.
\end{proof} 

\begin{lemma}
    \label{lem:imLW}
    For every $(\ell-m)$-dimensional $\fq$-subspace $V$ of $\fql$, there exists a unique $m$-dimensional $\fq$-subspace $W$ of $\fql$ such that $V=\im(L_W)$.
\end{lemma}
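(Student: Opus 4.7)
The plan is to reduce the existence claim to Lemma~\ref{lem:different_W_different_imLW} by a dimension and counting argument, together with the observation that $L_W$ is $\fq$-linear. Uniqueness is already free: if two $m$-dimensional subspaces $W$ and $W'$ both satisfied $\im(L_W)=\im(L_{W'})=V$, then Lemma~\ref{lem:different_W_different_imLW} would force $W=W'$, so it remains only to produce at least one $W$ for each given $V$.

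For existence, I would first set up the map $\Phi\colon W\mapsto \im(L_W)$ from the set $\mathcal{S}_m$ of $m$-dimensional $\fq$-subspaces of $\fql$ to the set $\mathcal{S}_{\ell-m}$ of $(\ell-m)$-dimensional $\fq$-subspaces. This makes sense because $L_W$ is $\fq$-linear from $\fql$ to itself with kernel exactly $W$ (dimension $m$), so by rank--nullity $\im(L_W)$ is an $\fq$-subspace of dimension $\ell-m$. Lemma~\ref{lem:different_W_different_imLW} shows that $\Phi$ is injective.

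Then I would invoke the fact that the number of $m$-dimensional $\fq$-subspaces of $\fql$ equals the Gaussian binomial coefficient $\binom{\ell}{m}_q$, and by the standard symmetry $\binom{\ell}{m}_q=\binom{\ell}{\ell-m}_q$, we have $|\mathcal{S}_m|=|\mathcal{S}_{\ell-m}|<\infty$. Since $\Phi$ is an injective map between two finite sets of the same cardinality, it must be a bijection, hence surjective. For the given $V\in \mathcal{S}_{\ell-m}$, the preimage $\Phi^{-1}(V)$ is the desired unique $W$.

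The only step that requires a bit of care is the cardinality assertion $|\mathcal{S}_m|=|\mathcal{S}_{\ell-m}|$; this is the identity $\binom{\ell}{m}_q=\binom{\ell}{\ell-m}_q$ and can be quoted from a standard reference, or proved directly by the explicit bijection $W\mapsto W^{\perp}$ relative to any nondegenerate $\fq$-bilinear form on $\fql$. With that in hand, no further computation on subspace polynomials is needed, and in particular we avoid re-running the composition argument from Lemma~\ref{lem:different_W_different_imLW}.
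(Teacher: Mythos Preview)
Your proposal is correct and follows essentially the same approach as the paper: both argue that $W\mapsto \im(L_W)$ is injective by Lemma~\ref{lem:different_W_different_imLW}, that its image lies in the set of $(\ell-m)$-dimensional subspaces by rank--nullity, and that the Gaussian-binomial identity $\binom{\ell}{m}_q=\binom{\ell}{\ell-m}_q$ forces this injection between equal-size finite sets to be a bijection. Your write-up just makes the map $\Phi$ and the rank--nullity step a bit more explicit.
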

\begin{proof}
    By Lemma~\ref{lem:different_W_different_imLW}, as $\im(L_W)\neq \im(L_{W'})$ if $W \neq W'$, the collection 
    \[
    S \triangleq \{\im(L_W) \colon W \text{ is an } m\text{-dimensional subspace of } \fql\}
    \]
    is a set of size 
    $
    \begin{bmatrix}
    \ell \\ m    
    \end{bmatrix}_q
    =
    \begin{bmatrix}
    \ell \\ \ell-m    
    \end{bmatrix}_q
    $. This means that $S$ is the set of all $\fq$-subspaces of dimension $\ell-m$ in $\fql$. Therefore, for every subspace $V$ of dimension $\ell-m$, there exists exactly one subspace $W$ of dimension $m$ so that $V=\im(L_W)$. 
\end{proof}

We now use the two lemmas above to prove the privacy of the repair scheme produced by Construction~I. Note that its correctness is guaranteed by Proposition~\ref{pro:subspace_repair}.

\begin{theorem}
    \label{thm:hidden_subspace}
    The scheme in Construction~\ref{construction:hidden_subspace} is a $1$-private repair scheme for one erasure of an $[n,k]_\FF$ Reed-Solomon code with $n-1$ helpers and a repair bandwidth of $(n-1)(\ell-m)$ sub-symbols in $\BB$, given that $n-k\geq |\BB|^m$.    
\end{theorem}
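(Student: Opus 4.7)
My plan is to verify the three quantitative claims of the theorem separately: repair bandwidth, correctness, and $1$-privacy.

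The bandwidth and correctness claims will be immediate. Each of the $n-1$ helper nodes returns exactly $\ell-m$ trace responses, each lying in $\BB$, so the total download is $(n-1)(\ell-m)$ sub-symbols. Correctness is inherited from Proposition~\ref{pro:subspace_repair}, which applies for \emph{any} $m$-dimensional $\BB$-subspace $W$ picked by the repair node, provided $|\BB|^m \leq n-k$; the random choice of $W$ does not interfere with reconstruction.

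The substantive work is in proving $1$-privacy. Unwinding~\eqref{eq:repair_equations}, the query delivered to the node storing $\fa$ under a selection $(\beta, W, \{\chi_h\})$ is the ordered tuple
\[
\qba \;=\; \left(\tfrac{\lambda_\alpha \chi_1}{\alpha-\beta},\, \ldots,\, \tfrac{\lambda_\alpha \chi_{\ell-m}}{\alpha-\beta}\right),
\]
which is an ordered $\BB$-basis of the subspace $V_{\alpha,\beta,W} \;=\; \tfrac{\lambda_\alpha}{\alpha-\beta}\,\im(L_W) \subseteq \FF$. I will argue: (i) when $W$ is drawn uniformly from the $m$-dimensional $\BB$-subspaces of $\FF$, Lemma~\ref{lem:imLW} forces $\im(L_W)$ to be uniformly distributed over the Grassmannian of $(\ell-m)$-dimensional $\BB$-subspaces of $\FF$; and (ii) multiplication by the nonzero scalar $\lambda_\alpha/(\alpha-\beta)$ is a $\BB$-linear bijection on $\FF$, hence induces a permutation of that Grassmannian, so $V_{\alpha,\beta,W}$ is also uniform on the Grassmannian \emph{independently of $\beta$}. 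This already delivers privacy at the level of the subspace spanned by the query.

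To lift this to privacy of the full query tuple (and not merely its span), Construction~\ref{construction:hidden_subspace} must also pick $\{\chi_h\}$ in a $\beta$-free way; uniform selection among ordered bases of $\im(L_W)$ is the cleanest choice and I expect it to be incorporated in the construction. Under that condition, the query tuple becomes a uniformly random ordered basis of a uniformly random $(\ell-m)$-dimensional $\BB$-subspace, with a distribution identical for every $\beta \in \A \setminus \{\alpha\}$. Bayes' rule with a uniform prior then yields $\pr(\be = \bes \mid \qba) = \tfrac{1}{n-1}$ for every $\bes \in \A \setminus \{\alpha\}$, matching the $1$-privacy condition in Definition~\ref{def:private_repair}. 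The main obstacle I anticipate is exactly this basis-randomization issue: were $\{\chi_h\}$ a deterministic function of $W$, a curious helper could recover $W$ from the span of the query via Lemma~\ref{lem:imLW} and then back-solve $\lambda_\alpha/(\alpha-\beta)$ from the tuple, exposing $\beta$; so randomizing the basis (or any equivalent $\BB$-linear scrambling) is the essential ingredient that must be folded into the construction.
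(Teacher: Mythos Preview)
Your approach is essentially the paper's: correctness and bandwidth are read off from Proposition~\ref{pro:subspace_repair}, and privacy comes down to Lemma~\ref{lem:imLW}. The paper phrases the privacy step as a counting argument---for every candidate $\beta'\in\A\setminus\{\alpha\}$ there is exactly one $m$-dimensional $W'$ with $\im(L_{W'})=\spn\{\eta_s(\alpha-\beta')/\lambda_\alpha\}$ producing the observed query---while you phrase it distributionally, showing the query is a uniformly random ordered basis of a uniformly random $(\ell-m)$-dimensional subspace regardless of $\beta$. These are the same bijection viewed from two sides.

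The one place you go beyond the paper is your explicit treatment of the basis $\{\chi_h\}$. Construction~\ref{construction:hidden_subspace} says only that the repair node ``selects an arbitrary basis'' of $\im(L_W)$, and the paper's proof silently folds the basis into the pre-image count. You correctly observe that if $\{\chi_h\}$ were a fixed deterministic function of $W$, a helper could, for each hypothesized $\beta'$, recover the unique $W'$ via Lemma~\ref{lem:imLW} and then test whether the observed tuple matches the canonical basis of $\im(L_{W'})$, potentially ruling some $\beta'$ out. Requiring the ordered basis to be drawn uniformly (or, equivalently, regarding the query as the subspace it spans rather than a specific tuple) closes this gap; the paper's argument is implicitly operating under that convention. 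Your proposal is thus correct and, on this point, slightly more careful than the paper's own proof.
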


\begin{proof}[Proof of Theorem~\ref{thm:hidden_subspace}]
    The node storing $\fa$, $\al\in \A\setminus\{\be\}$, receives the query $\qba = \big( \eta_s\triangleq\frac{\la\chi_h}{\al-\be} \big)_{s \in [\ell-m]}$. Since $\{\chi_h\}_{s\in [\ell-m]}$ is linearly independent, so is $\{\eta_s\}_{s\in [\ell-m]}$. For privacy, it suffices to show that given the knowledge of $\{\eta_s\}_{s\in [\ell-m]}$, every $\be \in \A\setminus\{\al\}$ is equally probable. To this end, we aim to prove that for every $\be' \in \A\setminus\{\al\}$, there exists exactly one $m$-dimensional subspace $W'$ such that in the corresponding repair scheme obtained in Construction~I, $\qbpa = (\eta_s)_{s\in [\ell-m]} = \qba$. In other words, one needs to show that there exists an unique subspace $W'$ such that $\eta_s = \frac{\la \chi'_s}{\al-\be'}$ for all $s \in [\ell-m]$. The existence of such a subspace is guaranteed by Lemma~\ref{lem:imLW} with $V \triangleq \spn(\{\eta_s(\al-\be')/\la\}_{s\in [\ell-m]})$, which completes the proof.
\end{proof}

\begin{algorithm}[htb!]
Consider an $[n,k]_\FF$ $\rs(\mathcal{A}, k)$ with $\FF$ a degree-$\ell$ extension of $\BB$ satisfying $n-k \geq |\BB|^m$, and $\{u_i\}_{i\in [\ell]}$ a $\BB$-basis of $\FF$, both publicly known. The goal is to $1$-privately repair $\fb$, $\be\in\A$.
\begin{itemize}
    \item \textbf{(Preparation)} The repair node selects an $m$-dimensional subspace $W$ uniformly at random, computes its subspace polynomial $L_W(x)$, and constructs the corresponding parity-check polynomials $r_i(x)$'s following \eqref{subspace-PCE}. Moreover, it also selects an arbitrary basis $\{\chi_h\}_{h\in [\ell-m]}$ of $\im(L_W)$.
    \item \textbf{(Upload)} The repair node generates the query vector\footnotemark{} $\qba = \big(\frac{\la \chi_h}{\alpha-\beta}\big)_{h\in [\ell-m]}$, which are sent to the nodes storing $\fa$, for every $\al \in \A \setminus \{\be\}$. 
    \item \textbf{(Download)} Each node $\fa$, $\al \in \A\setminus \{\be\}$, computes and sends the repair node $\rba =\big(\tr\big(\frac{\lambda_\al\chi_h}{\al-\be}\fa\big)\big)_{h \in [\ell-m]}$.
    \item \textbf{(Recovery)} The repair node recovers $\ell$ traces of $\fb$ using the received traces via \eqref{eq:repair_equations} and then $\fb$ via \eqref{eq:trace_reconstruction}. 
\end{itemize}
\caption{(Hidden-Subspace $1$-Private Repair Scheme)}
\label{construction:hidden_subspace}
\end{algorithm}
\footnotetext{Compared to the original repair scheme (no privacy), the repair node sends $(\ell-m)\times$ more data. However, as the data is usually of large size, the increase in upload is negligible compared to the download and hence ignored.}

\begin{example}
    \label{ex:toy}
    Reusing~\cite[Ex.~1]{dau2018repairing}, let $\BB = \ft$, $\FF = \ff = \{0,1,\xi,\xi^2\}$, $\ell = 2$, $n = 4$, $k = 2$, and $m=1$, where $\xi^2+\xi+1 = 0$. Then $\{1,\xi\}$ is a
$\BB$-basis of $\FF$. Moreover, each element $\ba \in \ff$ can be represented
by a pair of bits $(a_1,a_2)$ where $\ba = a_1 + a_2\xi$. Suppose the data consists of two objects
$(\ba,\bb) \in \ff^2$. To devise a systematic RS code, we associate with $(\ba,\bb) \in \ff^2$ a polynomial $f(x)=f_{\ba,\bb}(x)\triangleq \ba + (\bb-\ba)x$.
Then $f(0) = a_1 + a_2\xi = \ba$, $f(1) = b_1 + b_2\xi = \bb$,
$f(\xi) = (a_1+a_2+b_2) + (a_1+b_1+b_2)\xi$, and
$f(\xi^2) = (a_2+b_1+b_2) + (a_1+a_2+b_1)\xi$.
The four codeword symbols $f(0)$, $f(1)$, $f(\xi)$, and $f(\xi^2)$ are stored at four different nodes as depicted in 
Fig.~\ref{fig:toy_example}.

To repair $\fb$, the repair node \textit{randomly} picks a $1$-dimensional $\ft$-subspace $W$ of $\ff$. Equivalently, $\im(L_W)=\mathsf{span}_{\ft}(\chi)$ for a \textit{random} element $\chi \in \ff^*$. The query vector $\qba=\big(\frac{\chi}{\al-\be}\big)$ (note that $\la=1$) is sent to the node $\fa$, to which any $\be \neq \al$ is equally probable due to the lack of knowledge of $\chi$. For example, when $\alpha = \xi$, as demonstrated in Fig.~\ref{fig:toy_example}, when receiving a query $\qba = (\xi)$ from the repair node, this node responds with $\rba = \tr\big(\xi f(\xi)\big)=a_2+b_1$. 
Note that $\xi = \frac{\xi^2}{\xi-0}=\frac{1}{\xi-1}=\frac{\xi}{\xi-\xi^2}$.
Thus, from its perspective, this request could be for repairing $f(0)$ with $\chi=\xi^2$, or $f(1)$ with $\chi=1$ (as shown in Fig.~\ref{fig:toy_example}), or $f(\xi^2)$ with $\chi=\xi$ (nodes $f(0)$, $f(1)$, and $f(\xi)$ send $a_1$, $b_2$, and $a_2+b_1$, respectively). Thus, it cannot tell which of these three nodes is being repaired.
\end{example}

The hidden-subspace approach does \textit{not} work for $t>1$. For example, with $\ell-m=1$ and $\im(L_W)=\spn(\{\chi\})$, two colluding helper nodes storing $f(\alpha_1)$ and $f(\alpha_2)$, who receive the queries $\frac{\lambda_{\alpha_1}\chi}{\alpha_1-\be}$ and $\frac{\lambda_{\alpha_2}\chi}{\alpha_2-\be}$, can determine $\beta$ immediately by calculating the ratio of the two queries, assuming that the knowledge of $\lambda_{\alpha}$, $\alpha\in \A$, and $\chi$ is public.

\subsection{A Secret-Sharing Based Private Repair Scheme}
\label{sec:trace-scheme}

To generate a random set of parity-check polynomials, apart from selecting a random subspace $W$ and creating parity-check polynomials $\{r_i(x)\}_{i\in [\ell]}$ based on $L_W(x)$ as in \eqref{subspace-PCE}, an alternative approach is to keep $W$ fixed (and publicly known) while multiplying the $r_i(x)$'s with a randomly generated polynomial $R(x)$ of a relevant degree. More specifically, let 
\begin{equation}\label{eq:paritychecknew}
r_i(x) \triangleq \frac{L_{W}(u_i(x-\beta))}{x-\beta} R(x),\quad\text{for }i=1,\ldots,\ell,
\end{equation}
where $R(x) \triangleq \sum_{s=0}^{t-1} R_s x^s \in\FF[x]$ and $\{R_s\}_{0\leq s \leq t-1}$ are $t$ independent uniformly random variables from $\FF$. 
Under the assumption that $n-k\geq |\BB|^m+t-1$, such $r_i(x)$'s are parity-check polynomials for the code. 
Intuitively, as $R(x)$ provides a $t$-degree of randomness, the resulting repair scheme is private against every set of at most $t$ colluding helpers.
The scheme is formally described in Construction~II. Its correctness and privacy are guaranteed by Theorem~\ref{main-theorem}. 
As evidenced from the statement of the theorem, the price to pay for privacy is a higher level of redundancy, i.e., a larger value of $n-k$.

\begin{theorem}\label{main-theorem}
The scheme in Construction~\ref{construction:secret_sharing} is a $t$-private repair scheme for one erasure of an $[n,k]_\FF$ Reed-Solomon code with $n-1$ helpers and a bandwidth of $(n-1)(\ell-m)$ sub-symbols in $\BB$, given that $n-k\geq|\BB|^{m}+t-1$.
\end{theorem}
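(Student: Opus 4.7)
The plan is to establish correctness, privacy, and the bandwidth count in turn, with each piece essentially an adaptation of the subspace-polynomial scheme (Proposition~\ref{pro:subspace_repair}) augmented by an elementary Vandermonde argument for privacy. The bandwidth count is immediate: each of the $n-1$ helpers is queried for $\ell-m$ traces (one per basis element $\chi_h$), yielding $(n-1)(\ell-m)$ sub-symbols of $\BB$ downloaded in total.

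For correctness, I would first check that $\deg(r_i)\leq(|\BB|^m-1)+(t-1)=|\BB|^m+t-2\leq n-k-1$ under the stated hypothesis, so the $r_i(x)$ of~\eqref{eq:paritychecknew} are valid parity-check polynomials for $\rs(\A,k)$. Writing $L_W(u_i(\al-\be))=\sum_h\tau_{i,\al,h}\chi_h$ with $\tau_{i,\al,h}\in\BB$ and using $r_i(\be)=u_i l_0 R(\be)$, the parity check together with the $\BB$-linearity of $\tr$ yields the analogue of~\eqref{eq:repair_equations},
\[
\tr\!\left(u_i l_0 \lambda_\be R(\be)\fb\right) \;=\; -\!\!\sum_{\al\in\A\setminus\{\be\}}\sum_{h\in[\ell-m]}\tau_{i,\al,h}\,\tr\!\left(\frac{\la R(\al)\chi_h\fa}{\al-\be}\right).
\]
Thus the $\ell-m$ sub-symbols $\tr(\la R(\al)\chi_h\fa/(\al-\be))$ per helper determine $\tr(u_i l_0 \lambda_\be R(\be)\fb)$ for every $i\in[\ell]$. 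Since the repair node knows $\be$, it can sample $R$ so that $R(\be)\neq 0$ (a condition violated by only a $1/|\FF|$ fraction of $R\in\FF^t$); then $\{u_i l_0\lambda_\be R(\be)\}_{i\in[\ell]}$ is a $\BB$-basis of $\FF$ and $\fb$ is recovered via~\eqref{eq:trace_reconstruction}.

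For privacy, the query $\qbaj=(\lambda_{\al_j}R(\al_j)\chi_h/(\al_j-\be))_{h\in[\ell-m]}$, combined with the public values $\lambda_{\al_j}$ and the $\BB$-linearly independent $\{\chi_h\}$, reveals to node $\al_j$ exactly the field element $\rho_j\triangleq R(\al_j)/(\al_j-\be)\in\FF$ and nothing more. Hence any colluding set of size $|J|\leq t$ observes $(\rho_j)_{j\in J}\in\FF^{|J|}$. The heart of the proof is to show that for every hypothesis $\bes\in\A\setminus\{\al_j\}_{j\in J}$, the distribution of $(\rho_j)_{j\in J}$ conditional on $\be=\bes$ and over uniform $R\in\FF^t$ is uniform on $\FF^{|J|}$: the map $R\mapsto(R(\al_j))_{j\in J}$ is $\FF$-linear with matrix a $|J|\times t$ Vandermonde submatrix in the distinct $\al_j$'s, hence surjective when $|J|\leq t$, and coordinate-wise scaling by the nonzero $(\al_j-\bes)^{-1}$ preserves uniformity. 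Consequently $\pr((\rho_j)_{j\in J}=\rho\mid\be=\bes)=|\FF|^{-|J|}$ is independent of $\bes$, and Bayes' rule with a uniform prior yields $\pr(\be=\bes\mid\{\qbaj\}_{j\in J})=1/(n-|J|)$.

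The main subtlety I anticipate is the interplay between correctness (which forces $R(\be)\neq 0$) and the clean uniform-$R$ likelihood computation above. My plan is to treat the privacy statement as one about the randomized query distribution under uniform $R\in\FF^t$, so that the Vandermonde argument applies without conditioning, and to view $R(\be)\neq 0$ either as a resample condition for the repair node or as a vanishing correctness-failure event of probability $1/|\FF|$; a naive conditional analysis would not suffice in the boundary case $|J|=t$, where for a fixed observation the unique interpolating $R$ depends on $\bes$ and can vanish at $\bes$. All remaining ingredients---the evaluation $r_i(\be)=u_i l_0 R(\be)$, the $\im(L_W)$ expansion of $L_W(u_i(\al-\be))$, and the degree bookkeeping---are routine transpositions of the proof of Proposition~\ref{pro:subspace_repair}.
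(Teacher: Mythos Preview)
Your proposal is correct and follows essentially the same route as the paper: the correctness computation via degree bounds and trace linearity is identical, and your privacy argument (Vandermonde surjectivity of $R\mapsto(R(\alpha_j))_{j\in J}$ followed by scaling by the nonzero $(\alpha_j-\beta^*)^{-1}$) is exactly the paper's system~\eqref{system-private} rephrased as a distributional statement. You are in fact more careful than the paper on one point: the paper samples $R$ uniformly from $\FF^t$ and divides by $R(\beta)$ in the Recovery step without ever conditioning on $R(\beta)\neq 0$, so the subtlety you flag is real and your proposed resolution (analyze privacy under unconditioned uniform $R$ and accept a $1/|\FF|$ correctness-failure probability or resample) is the natural fix for a wrinkle the paper simply does not address.
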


\begin{proof}
First, it is obvious that the repair node downloads $\ell-m$ sub-symbols in $\BB$ from each helper node and therefore the bandwidth is $(n-1)(\ell-m)$. We now proceed to demonstrate the correctness and privacy of the scheme.

\textbf{Correctness.} We aim to show that the scheme correctly recovers $\fb$. To this end, we prove that the equation~\eqref{eq:trace-download} in the \textbf{Recovery} step of Construction~\ref{construction:secret_sharing} holds for all $i\in[\ell]$. In other words, \eqref{eq:trace-download} means that the repair node can reconstruct all the $\ell$ traces necessary for recovering $\fb$ using what it downloads from the helper nodes $(\tau_{j,h})$ and what it already knows $(\sigma_{i,j,h})$. 
Note that since $(\fa)_{\alpha\in\mathcal{A}}$ belongs to $\rs(\mathcal{A},k)$, we consider its dual code $\grs(\mathcal{A},k,\mathbf{\lambda})$ where $\mathbf{\lambda}$ is defined by~\eqref{GRSM}. 

It is clear that $r_i(x)$'s defined in~\eqref{eq:paritychecknew} are polynomials of degrees at most $|\BB|^{m}+t-2$ and $r_i(\beta)=u_il_0R(\beta)$, $i\in [\ell]$. As $|\BB|^{m}+t-2 < n - k$, $r_i(x)$'s correspond to dual codewords in $\grs(\mathcal{A},k,\mathbf{\lambda})$. Thus, the following parity check equations hold
\begin{align}\label{eq:paritycheckGW-new}
    u_il_0\lambda_{\beta}R(\beta)\fb =- \sum_{j=1}^{n-1} r_i(\alpha_j)\lambda_{\alpha_j}f(\alpha_j),\quad i \in [\ell]. 
\end{align}
Applying trace function to both sides of \eqref{eq:paritycheckGW-new}, using the linearity of trace and noting that $L_W(u_i(\al_j-\be))=\sum_{h=1}^{\ell-m}\sigma_{i,j,h}\chi_h$ (\textbf{Preparation}), $\kappa_j \triangleq \frac{R(\al_j)}{\al_j-\be}$, $j\in [n-1]$ (\textbf{Preparation}), and $\tau_{j,h}\triangleq \tr\left(\kappa_j\chi_{h}\lambda_{\alpha_j} f(\alpha_j)\right)$, $h\in[\ell-m]$ (\textbf{Download}), we obtain
\begin{align*}
\label{eq:trace-repair-new}
    &\tr(u_il_0\lambda_{\beta}R(\beta)\fb)\notag 
    =-\sum_{j=1}^{n-1} \tr(r_i(\alpha_j)\lambda_{\alpha_j}f(\alpha_j))\notag\\
    &= -\sum_{j=1}^{n-1} \tr\left(\left(\sum_{h=1}^{\ell-m}\sigma_{i,j,h}\chi_h\right) \frac{R(\alpha_j)}{\alpha_j-\beta}\lambda_{\alpha_j}f(\alpha_j)\right)\notag\\
    &= -\sum_{j=1}^{n-1}\sum_{h=1}^{\ell-m} \sigma_{i,j,h}\tr\left(\chi_h\kappa_j\lambda_{\alpha_j}f(\alpha_j)\right)
    = -\sum_{j=1}^{n-1}\sum_{h=1}^{\ell-m} \sigma_{i,j,h}\tau_{j,h}, 
\end{align*}
which shows that \eqref{eq:trace-download} is correct, as desired.

\textbf{Privacy.} It suffices to show that given the set of queries received by an arbitrary coalition of $t$ nodes storing $\{\faj \colon j \in J\}\subset \A\setminus \{\be\}$, $|J|=t$, no extra information about $\beta$ can be obtained apart from the knowledge that $\be \notin \{\faj \colon j \in J\}$. 
To be more specific, following Definition~\ref{def:private_repair}, we will demonstrate that given the knowledge of any arbitrary set of $t$ queries, every $\be$ not corresponding to these queries is equally probable. 

As $\kappa_j=\frac{R(\alpha_j)}{\alpha_j-\beta}=\frac{\sum_{s=0}^{t-1}R_s\al_j^s}{\alpha_j-\beta}$,
we have $\kappa_j\beta+\sum_{s=0}^{t-1}R_s\al_j^s=\kappa_j\alpha_j$. As $\beta$ and $\{R_s\}_{0\leq s \leq t-1}$ are unknown to the helper nodes, this can be treated as a linear equation with $t+1$ unknowns. 
Without loss of generality, we assume that $J = [t]$ and the coalition of these $t$ helpers received $\kappa_1,\ldots,\kappa_{t}$ as queries and form the following system of linear equations.

\begin{equation*}
\begin{bmatrix}
\kappa_1 & 1 & \ldots & \alpha_1^{t-1} \\
\vdots & \vdots & \ddots & \vdots \\
\kappa_{t} & 1 & \ldots & \alpha_{t}^{t-1} 
\end{bmatrix} \begin{bmatrix}
\beta \\
R_0 \\
\vdots \\
R_{t-1} 
\end{bmatrix}= \begin{bmatrix}
\kappa_1\alpha_1 \\
\kappa_2\alpha_2 \\
\vdots \\
\kappa_{t}\alpha_{t} 
\end{bmatrix},
\end{equation*}

\noindent which is equivalent to 
\begin{equation}\label{system-private}
\begin{bmatrix}
1 & \ldots & \alpha_1^{t-1} \\
\vdots & \ddots & \vdots \\
1 & \ldots & \alpha_{t}^{t-1} 
\end{bmatrix} \begin{bmatrix}
R_0 \\
\vdots \\
R_{t-1} 
\end{bmatrix}= \begin{bmatrix}
\kappa_1(\alpha_1-\beta) \\
\kappa_2(\alpha_2-\beta) \\
\vdots \\
\kappa_{t}(\alpha_{t}-\beta)
\end{bmatrix}.    
\end{equation}
Given the knowledge of $\kappa_1,\ldots,\kappa_{t}$ and $\alpha_1,\ldots,\alpha_{t}$, for every value of $\beta \notin\{\al_1,\ldots,\al_t\}$, one can solve the system~\eqref{system-private} uniquely. 
This means that every $\beta \notin\{\al_1,\ldots,\al_t\}$ is equally probable from the perspective of this coalition of $t$ helper nodes, which implies the privacy according to Definition~\ref{def:private_repair}.
\end{proof}

\begin{algorithm}[htb]
Consider an $[n,k]_\FF$ $\rs(\mathcal{A}, k)$ with $\FF$ a degree-$\ell$ extension of $\BB$,  $n-k \geq |\BB|^m + t - 1$ ($t\geq 1$), an $m$-dimensional subspace $W$, its subspace polynomial $L_W(x)$, a basis $\{\chi_h\}_{h\in [\ell-m]}$ of $\im(L_W)$, and a $\BB$-basis $\{u_i\}_{i\in[\ell]}$ of $\FF$ and its trace-dual basis $\{\widetilde{u}_i\}_{i\in [\ell]}$, all of which are publicly known. The goal is to $t$-privately repair $\fb$, where $\be = \al_n \in \A$ (relabelling if necessary).
\begin{itemize}
    \item \textbf{(Preparation)} The repair node first computes $\sigma_{i,j,h}\in \BB$, for $i\in [\ell]$, $j\in [n-1]$, and $h\in [\ell-m]$, so that $L_W(u_i(\al_j-\be))=\sum_{h=1}^{\ell-m}\sigma_{i,j,h}\chi_h$. It then chooses $t$ independent uniformly random variables $\{R_s\}_{0\leq s \leq t-1}$ from $\FF$ and construct a random polynomial $R(x) \triangleq \sum_{s=0}^{t-1} R_s x^s \in\FF[x]$. It also computes $R(\be)$ and $\kappa_j \triangleq \frac{R(\al_j)}{\al_j-\be}$, $j\in [n-1]$.
    \item \textbf{(Upload)} The repair node sends $\kappa_j$ as a query to the node storing $\faj$, for every $j \in [n-1]$. 
    \item \textbf{(Download)} Each node $\faj$, $j\in [n-1]$, computes and sends the repair node $\tau_{j,h}\triangleq \tr\left(\kappa_j\chi_{h}\lambda_{\alpha_j} f(\alpha_j)\right)\in\BB$, $h\in[\ell-m]$, where $\lambda_{\alpha_j}$ are the code's multipliers as in~\eqref{GRSM}.
    \item \textbf{(Recovery)} The repair node first computes the $\ell$ traces of $l_0\lambda_\be R(\be)\fb$ with respect to $\{u_i\}_{i\in [\ell]}$ as follows.
    \begin{equation}\label{eq:trace-download}
    \tr(u_il_0\lambda_{\beta}R(\beta)\fb)=-\sum_{j=1}^{n-1}\sum_{h=1}^{\ell-m}\sigma_{i,j,h}\tau_{j,h},\ i \in [\ell].    
    \end{equation}
    It then recovers $l_0\lambda_{\beta}R(\beta)\fb$ using \eqref{eq:trace_reconstruction}, which gives $\fb$, i.e., $\fb=\frac{\sum_{i=1}^\ell\tr(u_il_0\lambda_{\beta}R(\beta)\fb)\widetilde{u}_i}{l_0\lambda_{\beta}R(\beta)}$.
\end{itemize}
\caption{(Secret-Sharing $t$-Private Repair Scheme)}
\label{construction:secret_sharing}
\end{algorithm}

\begin{example}\label{ex:toy-2}
Let $\BB=\FF_2$, $\FF=\FF_8=\{0,1,\xi,\xi^2,1+\xi^2,1+\xi+\xi^2,1+\xi,\xi+\xi^2\}$, $\ell=3$, $n=8$, $k=5$, $m=1$ and $t=2$, where $\xi^3+\xi^2+1=0$. Then $\{1,\xi,\xi^2\}$ is a $\BB$-basis of $\FF$. Moreover, each element $\ba\in\FF_8$ can be represented as three bits $(a_1, a_2, a_3)$, where $\ba=a_1+a_2\xi+a_3\xi^2$. Let us choose a $1$-dimensional $\FF_2$-subspace $W=\{0, 1\}$ of $\FF_8$ with $L_W(x)=x(x-1)$. Clearly, $\im(L_W)$ has a basis $\{\chi_1, \chi_2\}$, where $\chi_1=1+\xi$ and $\chi_2=\xi+\xi^2$. Suppose the data corresponds to $f(x)=x^5+1$ and the codeword symbols $f(0),f(1),\ldots,f(\xi+\xi^2)$ are stored. 

To repair $\fb=f(\xi + \xi^2)$, the repair node chooses two independent random variables $R_0$ and $R_1$ from $\FF$. The query $\kappa_j = \frac{R(\alpha_j)}{\alpha_j-\beta}=\frac{R_0 + R_1\alpha_j}{\alpha_j - \beta}$ is sent to the node $f(\alpha_j)$. Then for any two nodes, every $\beta$ other than the $\al_j$'s of these nodes is equally probable due to the lack of knowledge of $R_0$ and $R_1$.
For example, nodes $f(0)$ and $f(1)$ receive queries $\kappa_1=\xi^2+\xi$ and $\kappa_2=1$ from the repair node.
However, from their perspectives, $\kappa_1$ and $\kappa_2$ can be queries for repairing $f(\xi)$ with $R_0=1$ and $R_1=\xi$, or for repairing $f(\xi^2)$ with $R_0=\xi$ and $R_1=\xi^2+\xi+1$, etc. 
Thus, $f(0)$ and $f(1)$ cannot tell which of the other six nodes is being repaired. Note that the proposed construction ensures $2$-private repair with a bandwidth of $14$ bits, while the naive approach requires $15$ bits.
\end{example}

\subsection{Private Retrieval of a Reed-Solomon-Coded Symbol}
\label{subsec:private_retrieval}A private \textit{retrieval scheme} for a Reed-Solomon codeword symbol is quite similar to a private repair scheme, except that all codeword symbols are available and the retrieval node should send enquiries to all $n$ nodes. \textcolor{black}{This problem clearly has similarities with the coded PIR setup~\cite{PIR1, PIR2, PIR3}. However, the way the data objects are encoded in the latter resulted in large storage on each node in absolute terms. In contrast, in our schemes, each node stores just a single field element.}

\begin{definition}
    \label{def:private_retrieval}
    Assume that the code $\rs(\A,k)$ and its parameters are publicly known, and that the node storing $\fa$ also knows~$\alpha$. Suppose the retrieval node, who wishes to retrieve $\fb$ for some $\beta\in \A$, sends a query vector $\qba = \big(\qboa,\ldots,\qbha\big)\in \FF^z$ to nodes storing $\fa$, $\alpha\in \A$. Upon receiving $\qba$, the node storing $\fa$ responds with $\rba = (\rboa,\ldots,\rbha)$ $=\big(\tr(\qboa \fa),\ldots,\tr(\qbha\fa)\big)\in \BB^z$. The repair node uses $\{\rba\}_{\al\in \A}$ to recover $\fb$. The retrieval scheme is $t$-private if the following two conditions are satisfied.
    \begin{itemize}
        \item \textbf{(Correctness)} 
        \textcolor{black}{$\fb$ can be rebuilt from all the responses.}
        \item \textbf{(Privacy)} $\pr\left(\be = \bes \mid \{\qbaj\}_{j\in J}\right)=\frac{1}{n}$ for every $\bes\in \A$, and every subset $\{\alpha_j\colon j \in J\}\subset \A$, $|J|\leq t$. In other words, every set of $|J|\leq t$ requests does not leak any information about $\beta$. 
    \end{itemize}  
\end{definition}

\begin{corollary}
\label{cr:PIR}
The scheme in Construction~\ref{construction:secret_sharing} can be transformed to a $t$-private repair scheme for one symbol of Reed-Solomon code $\rs(\mathcal{A},k)$ from all codesymbols with bandwidth $n(\ell-m)$ sub-symbols if  $n\geq|\BB|^{m}+k+t-1$.
\end{corollary}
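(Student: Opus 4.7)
The plan is to extend Construction~\ref{construction:secret_sharing} by sending one additional \emph{dummy} query to the target node, so that the retrieval node now queries all $n$ nodes instead of only $n-1$. Relabelling so that $\be=\al_n$, I would modify the \textbf{Upload} step to also send a fresh uniformly random $\kappa_n\in\FF$ (drawn independently of the $R_s$'s) to node $\be$; in the \textbf{Download} step node $\be$ then responds with its $\ell-m$ traces $\tr(\kappa_n\chi_h\lambda_\be\fb)$ exactly like every other helper, and the \textbf{Recovery} step is left unchanged, so that only the responses from $\al_1,\ldots,\al_{n-1}$ are used to reconstruct $\fb$ via \eqref{eq:trace-download} and \eqref{eq:trace_reconstruction}. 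The total bandwidth is $n(\ell-m)$ sub-symbols, and the hypothesis $n\geq|\BB|^m+k+t-1$ is exactly the parameter condition $n-k\geq|\BB|^m+t-1$ already required by Theorem~\ref{main-theorem}.

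Correctness is then immediate from Theorem~\ref{main-theorem}: the $n-1$ non-target responses already suffice to reconstruct $\fb$, so the extra response from node $\be$ can simply be discarded.

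For privacy, the plan is to show that for every $\bes\in\A$ and every $J\subseteq[n]$ with $|J|\leq t$, the joint distribution of the queries $\{\qbaj\}_{j\in J}$ conditional on $\be=\bes$ is uniform on $\FF^{|J|}$, which immediately gives $\pr(\be=\bes\mid\{\qbaj\}_{j\in J})=1/n$ as required by Definition~\ref{def:private_retrieval}. I split into two cases depending on whether the target sits inside the coalition. If $\bes\notin\{\al_j\}_{j\in J}$, every query received by $J$ has the structured form $R(\al_j)/(\al_j-\bes)$; the map $(R_0,\ldots,R_{t-1})\mapsto(R(\al_j)/(\al_j-\bes))_{j\in J}$ is a scaled Vandermonde of full row rank (since $|J|\leq t$ and the $\al_j$'s are distinct), so with $R$ uniformly random this vector is uniform on $\FF^{|J|}$. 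If instead $\bes=\al_{j_0}$ for some $j_0\in J$, the query to $\al_{j_0}$ is the independent uniform dummy $\kappa_n$, while the remaining $|J|-1$ queries are structured and uniform on $\FF^{|J|-1}$ by the same Vandermonde argument; combining with the independent dummy gives uniformity on $\FF^{|J|}$.

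The main subtlety is precisely the coalition-contains-target case: it is essential that $\kappa_n$ be drawn \emph{independently} of $R$, so that no joint statistic of the coalition's queries can distinguish ``dummy to target'' from ``structured to non-target''. Once this is secured, the rest reduces to the same Vandermonde/secret-sharing argument already used in the privacy proof of Theorem~\ref{main-theorem}, so no new obstacle arises.
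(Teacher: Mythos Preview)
Your proposal is correct, but it takes a genuinely different route from the paper. The paper does \emph{not} send an independent dummy to node~$\be$; instead it sends $\kappa_0=R(\be)$, the actual evaluation of the random polynomial at $\be$, and then argues privacy by a counting/simulation argument: for every hypothesised $\be'\in\A$ (whether or not $\be'$ lies in the coalition), the $t$ observed values determine a unique polynomial $R'$ of degree $<t$ via a Vandermonde system, so every $\be'$ is consistent with exactly one choice of randomness and hence equiprobable. Your version trades one extra field element of randomness for a cleaner distributional proof: by making the target's query independent of $R$, you can show directly that the coalition's view is uniform on $\FF^{|J|}$ regardless of $\be$, which is a more modular statement and sidesteps the case analysis entirely. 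The paper's choice is slightly more economical (no fresh randomness) and keeps the query rule uniform across all nodes; yours makes the independence structure transparent and the privacy argument shorter. Either yields the claimed bandwidth and $t$-privacy.
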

\begin{proof} 
All steps of the scheme in Construction~\ref{construction:secret_sharing} remain the same, except that we also upload $\kappa_0=R(\beta)$ to node $\beta$ and download $\tau_{0,1}=\tr(\kappa_{0}\chi_1\lambda_{\beta}\fb),\ldots,\tau_{0,\ell-m}=\tr(\kappa_{0}\chi_{\ell-m}\lambda_{\beta}\fb)$ from it. Let us formally show that any coalition of up to $t$ nodes cannot gain any extra information about $\beta$. The main idea remains the same as before: we show that for any possible value of $\beta$, an adversary with access to $t$ nodes and the corresponding $\kappa$ can form a system to determine random coefficients from $R(x)$, and this system will have a single and unique solution. Without loss of generality, assume that the adversary obtains $\kappa_1, \ldots, \kappa_t$ corresponding to evaluation points $\alpha_1, \ldots, \alpha_t$. We need to distinguish two cases:

\begin{itemize}
    \item Adversary assume that $\beta\notin\{\alpha_1,\ldots,\alpha_t\}$.  In this case, according to the proof of Theorem~\ref{main-theorem}, for any value of $\beta$, the adversary can find a single and unique solution to the system that determines random coefficients from $R(x)$.
    \item Adversary assume that $\beta\in\{\alpha_1,\ldots,\alpha_t\}$. In this case, without loss of generality let him/her assume that $\beta=\alpha_1$. As a result, $\kappa_1=R(\alpha_1)$, and the system to determine coefficients of $R(x)$ takes the following form:
    \begin{equation*}
    \begin{bmatrix}
1 & \ldots & \alpha_1^{t-1} \\
\vdots & \ddots & \vdots \\
1 & \ldots & \alpha_{t}^{t-1} 
\end{bmatrix}\begin{bmatrix}
R_1 \\
\vdots \\
R_t 
\end{bmatrix}= \begin{bmatrix}
\kappa_1 \\
\kappa_2(\alpha_2-\alpha_1) \\
\vdots \\
\kappa_{t}(\alpha_{t}-\alpha_t)
\end{bmatrix}.    
\end{equation*}
Clearly, it has a single and unique solution. The same holds for any other possible values of $\beta$.
\end{itemize}
As a result, all values of $\beta$ are equiprobable, which concludes the proof.
\end{proof}

\section{A Lower Bound on Repair Bandwidth}
\label{sec:bound}

To obtain a concrete lower bound on the repair bandwidth of a private repair scheme, we will limit ourselves to a family of linear random schemes called \textit{canonical random repair schemes}, which includes the scheme from Construction~\ref{construction:secret_sharing} (but not Construction~\ref{construction:hidden_subspace}). This family has a compact representation and requires only one $\FF$-symbol to be sent to each helper node.

\begin{definition}
    \label{def:CRRS}
    Assume that the code $\rs(\A,k)$ and its parameters are publicly known, and that the node storing $\fa$ also knows~$\alpha$. A \textit{canonical random (linear) repair scheme} (CRRS) for repairing $\fb$ corresponds to a set of parity-check polynomials $\{r_i(x) \triangleq P_i(x)Q(x)R(x)\}_{i\in [\ell]}$, where $\{P_i(x)\}_{i\in [\ell]}\subset \FF[x]$ are polynomials depending on both $\beta$ and some publicly known information, $Q(x)$ is a (private) function involving $\beta$ in its formation, and $R(x)\in \FF[x]$ is a (private) random polynomial. Moreover, the repair node sends a single query $\qba = Q(\alpha)R(\alpha)$ to the nodes storing $\fa$, $\alpha\in \A\setminus\{\beta\}$. The node \textcolor{black}{storing} $\fa$ uses $\qba$\textcolor{black}{, $\fa$} and public information about $\{P_i(x)\}_{i\in [\ell]}$ to compute $\rba$ \textcolor{black}{and send it back to the repair node}. Finally, the repair node uses $\{\rba\}_{\al\in \A\setminus\{\be\}}$ to recover \textcolor{black}{the set of parity check polynomials $\{r_i(x)\}_{i\in [\ell]}$ and, as a result, determine} $\fb$.
\end{definition}

\begin{remark}
    The scheme in Construction~\ref{construction:secret_sharing} is a CRRS with $P_i(x)=L_W(u_i(x-\beta))$, $i\in[\ell]$, and $Q(x)=1/(x-\beta)$. The public information about $P_i$ is $L_W$ and $u_i$.
\end{remark}

Following the initial paper on repair problem of Reed-Solomon code~\cite{guruswami2016repairing}, 
we can represent any linear repair scheme for $\textrm{RS}(\mathcal{A}, k)$  code $\C$ by a 
matrix whose columns belong to the dual code $\C^{\perp}= \textrm{GRS}(\mathcal{A}, k, \mathbf{\lambda})$ and correspond to parity-check equations used in the repair. In what follows let us focus on linear repair schemes and give a formal definition of an evaluation matrix. 

\begin{definition}
An $n\times \ell$ matrix $\mathbb{M}^{(j^*)}$ whose entries belong to $\FF$ is called an evaluation matrix for codeword symbol $j^*$ ($j^*\in[n]$) of a code $\C$ with bandwidth $b$ if the following conditions hold:
\begin{itemize}
    \item Columns of $\mathbb{M}^{(j^*)}$ are codewords of the dual code $\C^{\perp}$,
    \item $\textrm{rank}_{\BB}\mathbb{M}^{(j^*)}[j^*;:]=\ell$, where $\mathbb{M}^{(j^*)}[j^*;:]$ denotes the set of elements in the row $j^*$ of $\mathbb{M}^{(j^*)}$,
    \item $\sum_{j\in[n]\setminus\{j^*\}}\textrm{rank}_{\BB}\mathbb{M}^{(j^*)}[j;:]=b$.
\end{itemize}
\end{definition}

Clearly, for each $i\in[\ell]$ we have that
\begin{equation}\label{PCE_mat}
 \sum_{j=1}^nf(\alpha_j)\mathbb{M}^{(j^*)}[j;i]=0,  
\end{equation}
where $\mathbb{M}^{(j^*)}[j;i]=\lambda_jr_i(\alpha_j)$ and $r_i(x)\in\FF[x]$ is parity-check polynomial. 

As we are considering a canonical random linear repair scheme, the repair node uses $r_i(x)=P_i(x)Q(x)R(x)$ and sends $R(\alpha_j)Q(\alpha_j)$ to node $\fa$. 
In this case, $t$-private repair means that for any $t$ helper nodes values of $R(x)Q(x)$ given to them do not reveal $ Q(x)$ and, as a result, the repaired index. Let us show that to ensure $t$-privacy $R(x)$ must have a degree at least $t-1$. First, let us consider the case when degree of $R(x)$ is $t-1$ or, in other words, $R(x)=R_0+R_1x+\cdots+R_{t-1}x^{t-1}$. Without loss of generality let us assume that adversary knows $\kappa_1=R(\alpha_1) Q(\alpha_1),\ldots,\kappa_t=R(\alpha_t) Q(\alpha_t)$. Hence, he/she can form the following system of equations:

\begin{equation}\label{system}
 \left\{
\begin{array}{lr}
R_0+R_1\alpha_1+\cdots+R_{t-1}\alpha_1^{t-1}=\frac{\kappa_1}{ Q(\alpha_1)}\\
\;\;\;\;\;\;\;\;\;\;\;\;\;\;\;\;\;\;\;\;\;\;\;\vdots\\
R_0+R_1\alpha_t+\cdots+R_{t-1}\alpha_t^{t-1}=\frac{\kappa_t}{ Q(\alpha_t)}
\end{array}.
\right.
\end{equation}

Clearly, for any repaired index and, as a result, $ Q(x)$ and $ Q(\alpha_1),\ldots, Q(\alpha_t)$, adversary can find a unique values of $R_0,\ldots,R_{t-1}$ such that  $\kappa_1=R(\alpha_1)Q(\alpha_1),\ldots,\kappa_t=R(\alpha_t)Q(\alpha_t)$ and, hence, any values of repaired index are equiprobable. However, when the degree of polynomial $R(x)$
 is $t-2$ the system~\eqref{system} is transformed to
\begin{equation}\label{system2}
 \left\{
\begin{array}{lr}
R_0+R_1\alpha_1+\cdots+R_{t-2}\alpha_1^{t-2}=\frac{\kappa_1}{ Q(\alpha_1)}\\
\;\;\;\;\;\;\;\;\;\;\;\;\;\;\;\;\;\;\;\;\;\;\;\vdots\\
R_0+R_1\alpha_{t-1}+\cdots+R_{t-2}\alpha_{t-1}^{t-2}=\frac{\kappa_{t-1}}{ Q(\alpha_{t-1})}\\
R_0+R_1\alpha_t+\cdots+R_{t-2}\alpha_t^{t-1}=\frac{\kappa_t}{ Q(\alpha_t)}
\end{array}.
\right.
\end{equation}

It can be easily seen that this system is overdetermined meaning that adversary can obtain a unique $R_0,\ldots,R_{t-2}$ from the first $t-1$ equations of~\eqref{system2} and use the remaining equation to check either his/her choice of $ Q(x)$ is correct or not meaning that different $ Q(x)$ are not equiprobable and, as a result, values of repaired index is not equiprobable. The same logic works for lower values of degree of polynomial $R(x)$. As a result, in case of $t$-private repair, $R(x)$ must have degree $t-1$. This means that equation~\eqref{PCE_mat} can be written as
\begin{equation}
\sum_{j=1}^n\lambda_{\alpha_j}f(\alpha_j) R(\alpha_j) P_i(\alpha_j) Q(\alpha_j)=0.
\end{equation}
Therefore, we can form a matrix $\tilde{\mathbb{M}}^{(j^*)}$ as $\tilde{\mathbb{M}}^{(j^*)}[j,h]=\lambda_{\alpha_j} P_i(\alpha_j) Q(\alpha_j)$. Clearly, it is a repair matrix with bandwidth $b$ for Reed-Solomon code with polynomial $f(x) R(x)$ since columns of $\tilde{\mathbb{M}}^{(j^*)}$ are codewords of dual code and for fixed $j$ we have that
\begin{align*}
 &\textrm{rank}_{\BB}[\{R(\alpha_j) P_1(\alpha_j) Q(\alpha_j),\ldots,R(\alpha_j) P_i(\alpha_j) Q(\alpha_j)\}]\notag\\
 &=\textrm{rank}_{\BB}[\{P_1(\alpha_j) Q(\alpha_j),\ldots,P_i(\alpha_j)Q(\alpha_j\}]   
\end{align*}
and therefore, rank conditions for matrix $\tilde{\mathbb{M}}^{(j^*)}$ coincides with rank conditions for matrix $\mathbb{M}^{(j^*)}$. Consequently, the problem of $t$-private repair of $f(\alpha_i)$, where $\textrm{deg}(f(X))<k$, is equivalent to the problem of non-private repair of  $f(\alpha_i) R(\alpha_i)$,  where $\textrm{deg}(f(X) R(X))<k+t-1$. As a result, we can get the following estimation on bandwidth in symbols of $\BB$.
\begin{equation}\label{bound:fractional}
	b\ge b_{\min}=(n-1)\log_{|\BB|}\frac{|\FF|(n-1)}{(|\FF|-1)(n-k-t)+ n-1}\,.
\end{equation}
However, quantity~\eqref{bound:fractional} is obtained under assumption that each node contributes equally and fractional values of individual node contribution are possible. However, this bound can be improved by solving integer program problem as it was done in~\cite{DM} for single node repair without privacy guarantees. As a result, we can formulate the following theorem.

\begin{theorem}\label{thm:lower_bound}
Let $\C$ be an $[n,k]_{\FF}$ Reed-Solomon code. 
Then every $t$-private canonical random linear repair scheme for $\C$ over a subfield $\BB$ of $\FF$ must have bandwidth at least 
\begin{equation}\label{eq:bmin}
	b\hspace{-2pt}\geq\hspace{-2pt} b_{\min}\hspace{-2pt} =\hspace{-2pt} n_0\hspace{-2pt} \left\lfloor \log_{|\BB|} \frac{n-1}{L}\right\rfloor \hspace{-2pt}+\hspace{-2pt} (n-1-n_0)\hspace{-2pt}  \left\lceil \log_{|\BB|} \frac{n-1}{L} \right\rceil
\end{equation}
sub-symbols in $\BB$, where $L\hspace{-2pt}=\hspace{-2pt}\frac{1}{|\FF|}\hspace{-2pt}\Big((|\FF|\hspace{-2pt}-\hspace{-2pt}1)(n\hspace{-2pt}-\hspace{-2pt}k\hspace{-2pt}-\hspace{-2pt}t)\hspace{-2pt}+\hspace{-2pt} n\hspace{-2pt}-\hspace{-2pt}1\Big)$, and 
\begin{align}
	n_0 & = \left\lfloor \frac{L-(n-1)|\BB|^{-\left\lceil \log_{|\BB|} \frac{n-1}{L} \right\rceil}}{|\BB|^{-\left\lfloor \log_{|\BB|} \frac{n-1}{L}\right\rfloor} - |\BB|^{-\left\lceil \log_{|\BB|} \frac{n-1}{L}\right\rceil}}\right\rfloor.
\end{align}
\end{theorem}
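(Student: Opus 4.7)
The plan is to treat the lower bound as the value of an integer program obtained from the reduction and constraint already assembled in the text, and then solve that program by a standard convexity argument. First, I would invoke the equivalence established just before the theorem statement: any $t$-private canonical random linear repair scheme for $\rs(\A,k)$ corresponds to a (non-private) linear repair scheme for the Reed-Solomon code of effective dimension $k+t-1$ with the same bandwidth, since the private random polynomial $R(x)$ of degree exactly $t-1$ raises the dual-code degree by $t-1$ without altering the $\BB$-rank pattern of the evaluation matrix. This turns the problem into a purely combinatorial optimization over the rank vector $(\ell_j)_{j\neq j^*}$, where $\ell_j := \rank_{\BB}\tilde{\mathbb{M}}^{(j^*)}[j;:]$ and $b = \sum_{j\neq j^*} \ell_j$.

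Next, I would apply the per-helper contribution bound from the Dau--Milenkovic framework, the source of the fractional bound~\eqref{bound:fractional}. After the substitution $k \mapsto k+t-1$, it takes the form
$$\sum_{j\neq j^*} |\BB|^{-\ell_j} \;\leq\; L, \qquad L = \frac{(|\FF|-1)(n-k-t) + (n-1)}{|\FF|}.$$
Setting all $\ell_j$ equal to $\ell^* := \log_{|\BB|}((n-1)/L)$ makes this tight and yields the fractional bound $b \geq (n-1)\ell^*$ via Jensen's inequality applied to the convex function $x \mapsto |\BB|^{-x}$. The remaining task is to solve the integer program obtained by restricting $\ell_j$ to non-negative integers.

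For the integer program, a swap argument using strict convexity of $x \mapsto |\BB|^{-x}$ shows that every optimum must have $\ell_j \in \{\lfloor \ell^*\rfloor, \lceil \ell^*\rceil\}$ for all helpers: if two entries differed by at least two, replacing them by a pair closer together would strictly decrease $\sum \ell_j$ while keeping $\sum |\BB|^{-\ell_j}$ no larger, contradicting optimality. Letting $n_0$ denote the number of indices with $\ell_j = \lfloor \ell^*\rfloor$, the constraint becomes
$$n_0 |\BB|^{-\lfloor \ell^*\rfloor} + (n-1-n_0)|\BB|^{-\lceil \ell^*\rceil} \;\leq\; L.$$
Rewriting the objective as $b = (n-1)\lceil \ell^*\rceil - n_0$, which is strictly decreasing in $n_0$, optimality forces $n_0$ to be the largest non-negative integer satisfying the above inequality. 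Solving for $n_0$ gives exactly the floor expression in the theorem statement, and evaluating $b$ at this $n_0$ produces $b_{\min}$ in the claimed form.

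The main obstacle is making the constraint $\sum_{j\neq j^*} |\BB|^{-\ell_j} \leq L$ precise in the canonical-random setting: one must verify that the MDS-style rank inequality from~\cite{DM}, originally derived for arbitrary non-private linear repair, still applies verbatim to the factored parity-check polynomials $r_i(x) = P_i(x)Q(x)R(x)$ used here, and that the bookkeeping with the random multiplier $R(x)$ does not change the rank contributions $\ell_j$ beyond the degree shift $k\mapsto k+t-1$ already accounted for. Once the constraint is in place, the convex exchange step that promotes the fractional bound to the integer bound, together with the explicit closed form for $n_0$, follows the same template as the non-private argument in~\cite{DM} applied with parameters shifted by the privacy offset $t-1$.
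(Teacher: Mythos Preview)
Your proposal is correct and follows essentially the same route as the paper: reduce the $t$-private CRRS to a non-private linear repair of an RS code with effective dimension $k+t-1$ (via the observation that multiplying by $R(\alpha_j)$ preserves the $\BB$-rank of each row of the evaluation matrix), and then invoke the integer-program strengthening of the Dau--Milenkovic bound from~\cite{DM} with the shifted parameters. The paper simply cites~\cite{DM} for the integer-program step, whereas you spell out the convexity/swap argument and the closed form for $n_0$; this is exactly what~\cite{DM} does, so no new idea is needed there.
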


\begin{remark}
In our case, we assume that repair of $R(\beta) \fb$ is the same as the repair of $\fb$ because we know the polynomial $R(x)$ and can easily compute $R(\beta)$. However, when this assumption does not hold, the last part of our derivations does not apply, and the bound may take a different shape.
\end{remark}

\begin{corollary}
When $n=|\FF|=q^\ell$ and $n=q^m+k+t-1$, every $t$-private canonical random linear repair scheme for $[n,k]_\FF$ Reed-Solomon code over the subfield $\BB=\fq$ must have bandwidth at least $(n-1)(\ell-m)$ sub-symbols in $\BB$.
\end{corollary}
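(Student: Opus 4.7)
The plan is a direct computation: substitute the specific parameters $|\FF| = n = q^\ell$, $n - k - t = q^m - 1$ (obtained from the hypothesis $n = q^m + k + t - 1$), and $|\BB| = q$ into the bound of Theorem~\ref{thm:lower_bound}, and observe that all the ceiling/floor gymnastics collapse to a clean expression that exactly matches the bandwidth achieved by Construction~\ref{construction:secret_sharing}.

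First I would compute the quantity $L$ appearing in Theorem~\ref{thm:lower_bound}. Using $|\FF| - 1 = q^\ell - 1$, $n - k - t = q^m - 1$, and $n - 1 = q^\ell - 1$, the definition gives
$$L = \frac{1}{q^\ell}\Big((q^\ell - 1)(q^m - 1) + (q^\ell - 1)\Big) = \frac{(q^\ell - 1)\, q^m}{q^\ell}.$$
From this expression, the central ratio in the bound simplifies beautifully:
$$\frac{n-1}{L} = \frac{q^\ell - 1}{(q^\ell-1)q^m/q^\ell} = \frac{q^\ell}{q^m} = q^{\ell-m},$$
so $\log_{|\BB|}\frac{n-1}{L} = \ell - m$ is an \emph{integer}.

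Because the logarithm is an integer, its floor and its ceiling are both equal to $\ell - m$. Consequently the bound
$$b_{\min} = n_0 \left\lfloor \log_{|\BB|}\tfrac{n-1}{L} \right\rfloor + (n - 1 - n_0) \left\lceil \log_{|\BB|}\tfrac{n-1}{L} \right\rceil$$
reduces to $n_0(\ell - m) + (n - 1 - n_0)(\ell - m) = (n-1)(\ell-m)$, independently of the precise value of $n_0$. Combined with the fact that Construction~\ref{construction:secret_sharing} attains this bandwidth under the same constraint $n - k \geq q^m + t - 1$, this establishes the corollary.

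The main (mild) obstacle is notational rather than mathematical: in the integer case the denominator $|\BB|^{-\lfloor\cdot\rfloor} - |\BB|^{-\lceil\cdot\rceil}$ in the definition of $n_0$ vanishes, so the formula for $n_0$ is formally $0/0$. I would address this by remarking that the two bracketed quantities in $b_{\min}$ coincide, so $n_0$ drops out entirely and its ill-definedness is irrelevant; alternatively one can view the integer case as the limit of the generic case where $n_0 \in \{0, 1, \dots, n-1\}$ may be chosen arbitrarily without affecting $b_{\min}$.
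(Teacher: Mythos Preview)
Your proposal is correct and is exactly the intended argument: the paper states the corollary immediately after Theorem~\ref{thm:lower_bound} without proof, precisely because it follows by the direct substitution you carry out. Your observation about the $0/0$ indeterminacy in $n_0$ is a fair cosmetic point, and your resolution (that $n_0$ drops out once $\lfloor\cdot\rfloor=\lceil\cdot\rceil$) is the right way to handle it.
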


\section{Numerical results}
\label{sec:numerical-results}
Let us consider the private repair scheme derived in this paper and the corresponding lower bound. Let us fix $\FF=\textrm{GF}(2^8)$ since this field is popular in practice, and $\BB=\textrm{GF}(2)$. We choose parameters so that we can attain the lower bound in the case of repair by all available code symbols. Specifically, we set $k+t=129$ and choose $k=99$. As a result, in case of $n=256$ the scheme in Construction~\ref{construction:secret_sharing} attains the lower bound on private repair bandwidth. We also include in our graph the scheme in Construction~\ref{construction:secret_sharing} without privacy, which is essentially the original Dau-Milenkovic scheme~\cite{dau2021repairing} and a lower bound on the repair bandwidth without privacy requirements from the same paper. It can be observed that privacy constraints shift the graph of bandwidth vs. the number of available nodes to the right and slightly upward, indicating that we need to download more to perform private repair with the same number of involved nodes. We note that in plotting the graph hereinafter, we consider the minimum overall bandwidth for a smaller or equal number of available nodes and possible dimension of subspaces utilized to form subspace polynomials. 
 \begin{figure}[htb]
	\centering
	\includegraphics[width=0.5\textwidth]{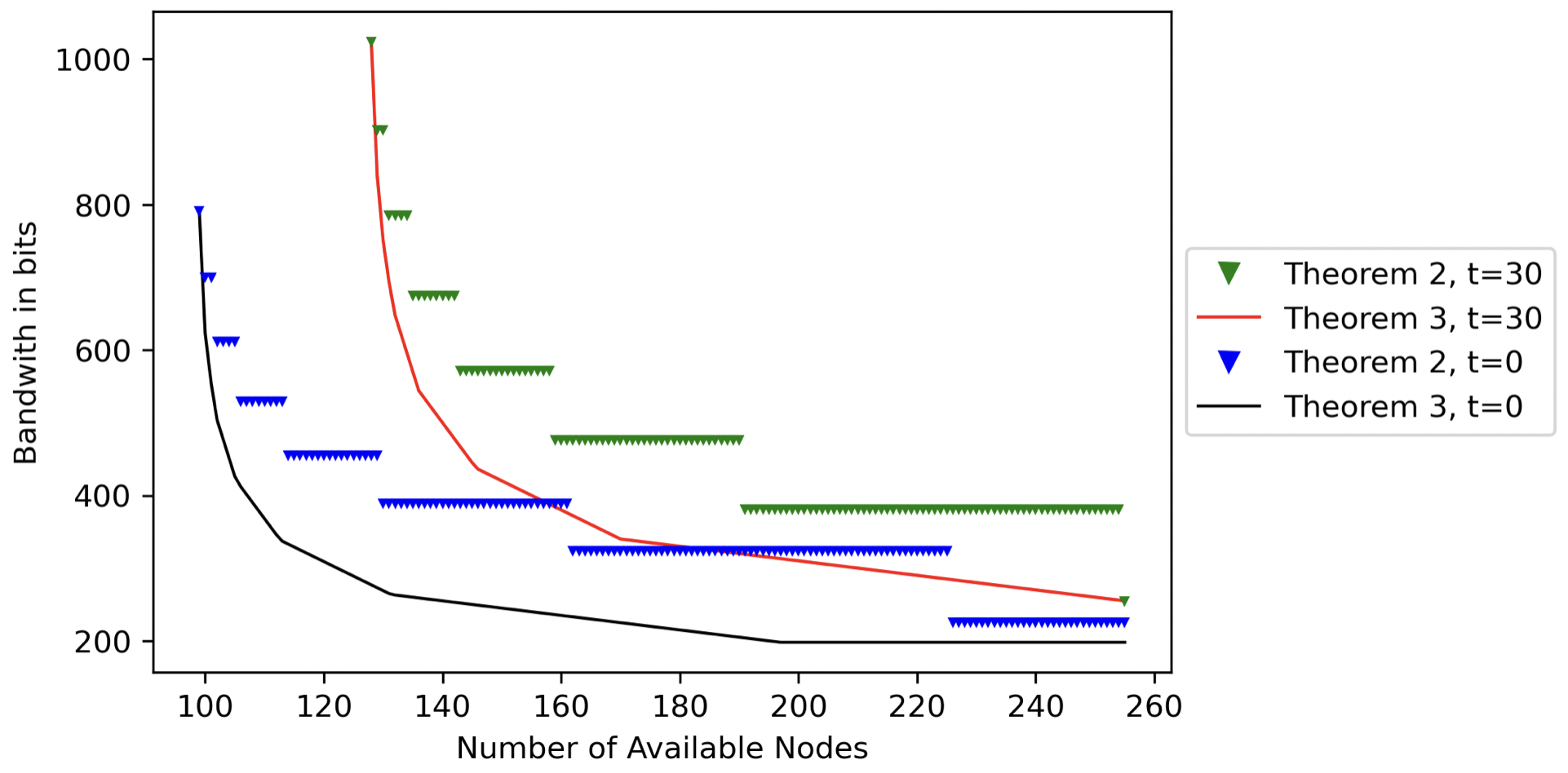}
	\caption{Recovery bandwidth for private repair of RS code with $k=99$ over $\textrm{GF}(2^{8})$}
	\label{2}
\end{figure}

\section{Conclusion}
\label{sec:conclusion}
We initiated the study of Reed-Solomon code repair with the additional goal of ensuring the privacy of the repaired index. We provided explicit schemes as well as a lower bound on total bandwidth, matching at some points. However, we view our work as just the tip of the iceberg, with many open problems and unanswered questions. For example, can we generalize our hidden subspace approach to several colluding nodes or ensure private repair with all nodes involved without an increase in bandwidth? Also, is it possible to perform recovery by a third party based on server responses only, or to generalize our approach for high sub-packetization regime?

{\section*{Acknowledgements.}
\textcolor{black}{The work of Stanislav Kruglik  was supported by the Ministry of Education, Singapore, under its MOE AcRF Tier~2 Award under Grant MOE-T2EP20121-0007. The work of Han Mao Kiah was supported by the Ministry of Education, Singapore, under its MOE AcRF Tier~2 Award under Grant MOE-T2EP20121-0007 and MOE AcRF Tier~1 Award under Grant RG19/23. The work of Son Hoang Dau is supported by the ARC DECRA Grant DE180100768.}
}

\bibliographystyle{IEEEtran}
\balance
\bibliography{ref_list}

\end{document}